\newcommand*{\Z}{\mathds{Z}}
\newcommand*{\bX}{{\bf X}}
\newcommand{\mylabel}[2]{#2\def\@currentlabel{#2}\label{#1}}
\newcommand{\xdashleftrightarrow}[2][]{\ext@arrow 3359\leftrightarrowfill@@{#1}{#2}}
\def\leftrightarrowfill@@{\arrowfill@@\leftarrow\relbar\rightarrow}
\def\arrowfill@@#1#2#3#4{%
 $\m@th\thickmuskip0mu\medmuskip\thickmuskip\thinmuskip\thickmuskip
 \relax#4#1
 \xleaders\hbox{$#4#2$}\hfill
 #3$%
}
\renewcommand{\@algocf@capt@plain}{above}
\newtheorem{theorem}{Theorem}
\newtheorem{lemma}{Lemma}
\newtheorem{remark}{Remark}
\newtheorem{subtheorem}{Theorem}[theorem]
\begin{document}

%

%

\twocolumn[
\aistatstitle{Necessary and sufficient conditions for causal feature selection in time series with latent common causes}

\aistatsauthor{ Atalanti A. Mastakouri \And Bernhard, Sch{\"o}lkopf \And  Dominik, Janzing}

\aistatsaddress{Department of Empirical Inference\\
  MPI for Intelligent Systems\\
  T{\"u}bingen, Germany \And  Department of Empirical Inference\\
  MPI for Intelligent Systems\\
  T{\"u}bingen, Germany \And Amazon Research T{\"u}bingen,\\ Germany } ]

\begin{abstract}
We study the identification of direct and indirect causes on time series and provide conditions in the presence of latent variables, which we prove to be necessary and sufficient under some graph constraints. Our theoretical results and estimation algorithms require two conditional independence tests for each observed candidate time series to determine whether or not it is a cause of an observed target time series. We provide experimental results in simulations, as well as real data. Our results show that our method leads to very low false positives and relatively low false negative rates, outperforming the widely used Granger causality.
\end{abstract}

\section{INTRODUCTION}
Causal feature selection in time series is a fundamental problem in several fields such as biology, economics and climate research \citep{runge2019inferring}. Often the causes of a target time series need to be identified from a pool of candidate causes, while latent variables cannot be excluded. It is also a problem that to date has not found an overall solution yet. 

While Granger causality \citep{Wiener1956, Granger1969, Granger1980} (see definition Section 1.1. in Appendix) has been the standard approach to causal analysis of time series data since half a century, several issues caused by violations of its assumptions (causal sufficiency, no instantaneous effects) have been described in the literature \citep{PetJanSch17}. Several approaches addressing these problems have been proposed during the last decades \citep{hung2014,GUO200879}.
Nevertheless, it is fair to say that causal inference in time series is still challenging -- despite the fact that the time order of variables renders it an easier problem than the typical `causal discovery problem' of inferring the causal DAG among $n$ variables without any prior knowledge on causal directions \citep{pearl2009,Spirtes1993}. 
The discovery of the causal graph from data is largely based on the graphical criterion of d-separation formalizing the set of conditional independences (CI) to be expected, based on the causal Markov condition and causal faithfulness \citep{Spirtes1993} (def. in Sec. 1
 in Appendix). One can show that Granger causality can be derived from d-separation (see, e.g., Theorem~10.7 in \citep{PetJanSch17}). Several authors showed how to derive d-separation based causal conclusions in time series beyond Granger's work. \cite{Entner2010OnCD} and \cite{malinsky18a}, for instance, are inspired by the FCI algorithm \citep{Spirtes1993} and the work from \cite{eichler2007causal}, without assuming causal sufficiency, aiming at the full graph causal discovery (for an extended review see \citep{Runge2018review, runge2019inferring}), and therefore needing extensive conditional independence testing. PCMCI (\citep{runge2019detecting}) although  reaches lower rates of false positives compared to classical Granger causality (def. in Appendix Section 1.1) in full graph causal discovery, it still relies on the assumption of causal sufficiency. A method that focuses on the narrower problem that we tackle here is seqICP \citep{pfister2019invariant}. We give an extensive comparison of the related methods in Section \ref{discussion}.

In the present work, we study the problem of causal feature selection in time series. By this, we mean the detection of direct and indirect causes of a given target time series. Under some connectivity assumptions, we construct conditions, which we prove to be sufficient for the identification of direct and indirect causes, and necessary for direct unconfounded causes, even in the presence of latent variables. In contrast to other CI based methods, our method directly constructs the right conditioning sets of variables, without \textit{searching} over a large set of possible combinations. It does so with a step that identifies the nodes of the time series that enter the previous time step of the target node, thus avoiding statistical issues of multiple hypothesis testing. We provide experimental results on simulated graphs of varying numbers of observed and hidden time series, density of edges, noise levels, and sample sizes. We show that our method leads to almost zero false positives and relatively low false negative rates, even in latent confounded environments, thus outperforming Granger causality. Finally, we achieve meaningful results on experiments with real data. We refer to our method as \textit{SyPI} as it performs a \textbf{Sy}stematic \textbf{P}ath \textbf{I}solation for causal feature selection in time series.

\section{THEORY AND METHODS}
We are given observations from a target time series $Y:= (Y_t)_{t\in \Z}$ whose causes we wish to find, and observations from a multivariate time series $\bX:=((X^1_t,\dots,X^d_t))_{t\in \Z}$ of potential causes (candidates). Also, we allow an unobserved multivariate time series $\boldsymbol{U_t}:=((U_t^1,\dots,U^m_t))_{t\in \Z}$, which may act as common cause of the observed ones. The system consisting of $\bX$ and $Y$ is not assumed to be causally sufficient, hence we allow for unobserved series $\boldsymbol{U_t}$. We introduce the following terminology to describe the causal relations among $\bX, \mathbf{U}, Y$:

\paragraph{Terminology-Notation:}
\begin{enumerate}
\itemsep0em 
\item [\mylabel{t1}{T1}] ``full time graph'': the infinite DAG having $X^i_t, Y_t$ and $U^j_t$ as nodes.
\item [\mylabel{t2}{T2}] ``summary graph'' is the directed graph with nodes $(X^1 ,..., X^d, U^1 ,..., U^d, Y)=: \boldsymbol{Q}$ containing an arrow from $Q^j$ to $Q^k$ for $j\not = k$ whenever there is an arrow from $Q_t^j$ to $Q_s^k$ for $t \leq s \in Z$. \citep{PetJanSch17}
\item [\mylabel{t3}{T3}] ``$Q^i_t \rightarrow Q^j_s$'' for $t \leq s \in Z$ means a directed path that does not include any intermediate observed nodes in the full time graph (confounded or unconfounded).
\item [\mylabel{t4}{T4}] ``$Q^i_t \dashrightarrow Q^j_s$'' for $t \leq s \in Z$ in the full time graph means a directed path from $Q^i_t$ to $Q^j_s$.
\item [\mylabel{t5}{T5}] ``confounding path'': A confounding path between $Q^i_t$ and $Q^j_s$ in the full time graph is a path of the form $Q^i_t \dashleftarrow Q^k_{t'} \dashrightarrow Q^j_s$, $t' \leq t, s \in Z$ consisting of two directed paths and a common cause of $Q^i_t$ and $Q^j_s$.
\item [\mylabel{t6}{T6}] ``confounded path'': an arbitrary path between two nodes $Q^i_t$ and $Q^j_s$ in the full time graph which co-exists with a confounding path between $Q^i_t$ and $Q^j_s$. 
\item [\mylabel{t7}{T7}] ``sg-unconfounded'' (summary-graph-unconfounded) causal path: A causal path in the full time graph that does not appear as a confounded path in the summary graph .
\item [\mylabel{lagdefinition}{T8}] ``lag'': $v$ is a lag for the ordered pair of a time series $X^i$ and the target $Y$ $(X^i, Y)$ if there exists a collider-free path $X^i_t$- - -$Y_{t+v}$ that does not contain a link of this form $Q^r_{t'} \rightarrow Q^r_{t'+1},$ with $t'$ arbitrary, for any $r \not \equiv i,j$, nor any duplicate node, and any node in this path does not belong to $X^i, Y$. See explanatory Figure \ref{example_lag}.
\item [\mylabel{singlelags}{T9}] ``single-lag dependencies'': We say that a set of time series ($\mathbf{X}, Y$) have ``single-lag dependencies'' if all the $X^i \in \mathbf{X}$ have only one lag $v$ for each pair $X^i, Y$. Otherwise we refer to ``multiple-lag dependencies''.
\end{enumerate}

Figure \ref{example_lag} shows some example graphs and the lags between the candidate and the target time series, based on the definition \ref{lagdefinition}. The integers defined by the highlighted green path between $X^i$ and $Y$ in graphs (a) and (b) are example lags for the singla-lag (a) and multi-lag graph (b) accordingly, while the path in (c) does not define a lag because it contains a link $Q^r_{t+1} \rightarrow Q^r_{t+2}$. If the links between the time series were direct links, then the correct lag for $(X^i,Y)$ in (c) would be 2.

\begin{figure}[H]
\includegraphics[width=\columnwidth]{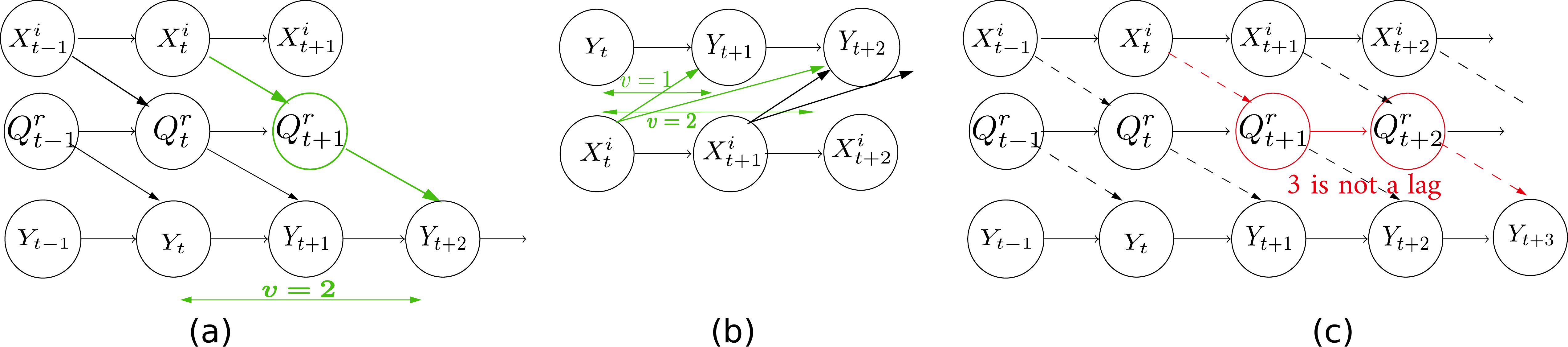}
\caption{In (a) we have a single lag depedendency graph, and the integer 2 is the lag for $(X^i,Y)$. (b) shows a multi-lag dependency graph where both integers 1 and 2 are lags for $(X^i,Y)$. On the contrary, the red coloured path in (c) that corresponds to the integer 3 is not a lag, because it contains the link $Q^r_{t+1} \rightarrow Q^r_{t+2}$.}\label{example_lag}
\end{figure}

We now assume that the graph satisfies the following assumptions. Note that the first five are usually standard assumptions of time series analysis and causal discovery, while assumptions \ref{targetHasNoEffects} - \ref{memorylesshiddendirect} impose some restrictions on the connectivity of the graph.
\paragraph{Assumptions:} 
\begin{enumerate}
\itemsep0em 
\item [\mylabel{CMC}{A1}] \textbf{Causal Markov condition} in the full time graph.
\item [\mylabel{faith}{A2}] \textbf{Causal Faithfulness} in the full time graph \footnote{For A1, A2 see definition in Sec. 1 in Appendix.}.
\item [\mylabel{noInverseTime}{A3}] \textbf{No backward arrows} in time $X^i_{t'} \not \rightarrow X^j_{t}, \forall t'>t$ 
\item [\mylabel{stationary graphs}{A4}] \textbf{Stationary} full time graph: the full time graph is invariant under a joint time shift of all variables
\item [\mylabel{acyclic}{A5}] The full time graph is \textbf{acyclic}.
\item [\mylabel{targetHasNoEffects}{A6}] The \textbf{target} time series $Y$ is a sink node. 
\item [\mylabel{timeDependencyFromt_1}{A7}] There is an arrow $X^i_{t-1} \rightarrow X^i_{t}, Y_{t-1} \rightarrow Y_{t} \forall i, t \in \Z$. Note that arrows $U^i_{t-1} \rightarrow U^i_{t}$ need not exit, we then call $U$ memoryless. 
\item [\mylabel{notime_s}{A8}] There are no arrows $Q^i_{t-s} \rightarrow Q^i_{t}$ for $s>1$. 
\item [\mylabel{memorylesshiddendirect}{A9}] Every variable $U^i$ that affects $Y$ \textbf{directly} (no intermediate observed nodes in the path in the summary graph) or that is connected with an observed collider in the summary graph should be memoryless ($U^i_{t-1} \not \rightarrow U^i_{t}$) and should have single-lag dependencies with $Y$ in the full time graph.\footnote{Note that this assumption is only required for the completeness of the algorithm against direct false negatives (Theorem \ref{theorem2}). The violation of this assumption does not spoil Theorem \ref{theorem1a}/\ref{theorem1b}. The existence of a \textbf{latent variable with memory} affecting the target time series $Y$ \textbf{directly}, or of a \textbf{latent variable affecting directly the target with multiple lags} renders impossible the existence of a conditioning set that could d-separate the future of the target variable and the past of any other observed variable.}
\end{enumerate}

Below, we present three theorems for detection of causes in the full time graph. \textbf{Theorem \ref{theorem1a}} provides \textbf{sufficient conditions for direct and indirect sg-unconfounded causes in single-lag dependency graphs}. \textbf{Theorem \ref{theorem1b} provides sufficient conditions for direct and indirect causes in multi-lag dependency graphs.} \textbf{Theorem \ref{theorem2}} provides \textbf{necessary conditions for identifying all the direct sg-unconfounded causes} of a target time series, assuming the imposed graph constraints. 

\paragraph{Intuition for proposed conditions in Theorems \ref{theorem1a}/\ref{theorem1b} and \ref{theorem2}:} 

The idea is to \textit{isolate} the path $X^i_{t-1} \rightarrow X^i_{t}$ - -$Q^j_{t'} \dashrightarrow Y_{t+w_i}, w_i \in Z, t'<t+w_i$ in the full time graph, and extract triplets $(X^i_{t-1}, X^i_{t}, Y_{t+w_i})$ as in \citep{MasSchJan19}. This way we can exploit the fact that if there is a confounding path between $X^i_{t}$ and $Y_{t+w_i}$, then $X^i_{t}$ will be a collider that will unblock the path between $X^i_{t-1}$ and $Y_{t+w_i}$ when we condition on it. In this path ``- -'' means $\dashleftarrow$ or $\dashrightarrow$ and $Q^j_{t'}$ (if observed) in addition to any other intermediate variable in the path $X^i_{t}$ - -$Q^j_{t'} \dashrightarrow Y_{t+w_i}$ must $ \not \in \{X^i, Y\}$. \cite{MasSchJan19} proposed sufficient conditions for causal feature selection in a DAG (no time-series) where a cause of a potential cause was known or could be assumed due to time-ordered pair of variables.

Our goal is to propose necessary and sufficient conditions that will differentiate between $Q^j_{t'}$ being a common cause or - -$Q^j_{t'} \dashrightarrow$ being a (in)direct edge to $Y_{t+w_i}$ in the full time graph. Figure \ref{summarygraph} visualizes why time-series raise an additional challenge for identifying sg-unconfounded causal relations. While the influence of $X^j$ on $Y$ is unconfounded in the summary graph, the influence $X_t^j \to Y_{t+1} (\equiv Y_{t+w_j})$ is confounded in the full time graph due to its own past; for example $X^j_t$ and $Y_t$ are confounded by $X^j_{t-1}$. 

\begin{figure}[H]
\includegraphics[width=\columnwidth]{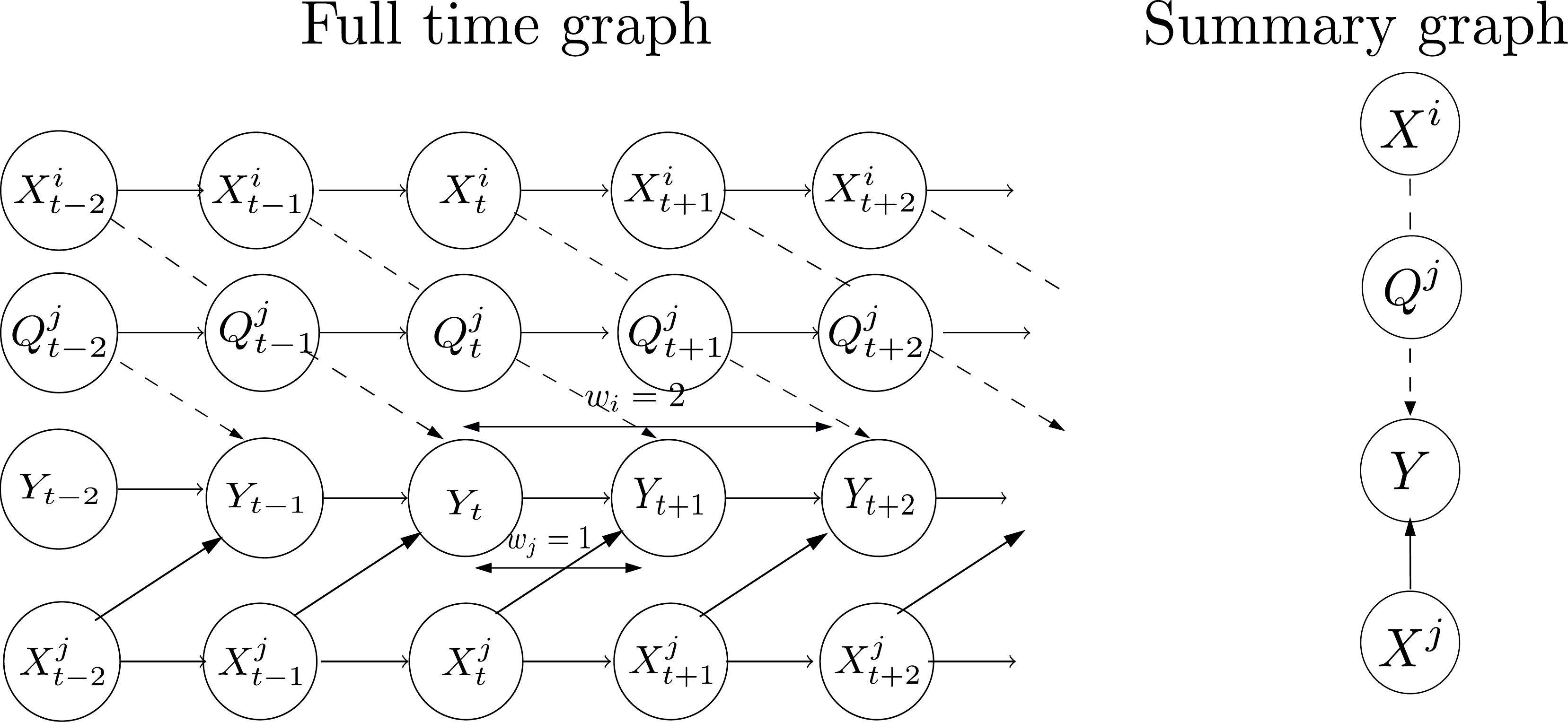}
\caption{An example full time graph of 2 observed, 1 potentially hidden and 1 target time series. The summary graph is presented to 
point out the challenge of identifying sg-unconfounded causal paths in time series, where the past of each series introduces dependencies that are not visible in the summary graph.}\label{summarygraph}
\end{figure}

Therefore we need to condition on $Y_{t} (\equiv Y_{t+w_j-1})$ to remove past dependencies. If no other time series were present, that would be sufficient. However, in the presence of other time series affecting the target $Y$, $Y_{t+w_j-1}$ becomes a collider that unblocks dependencies. If for example we want to examine $X^i$ as a candidate cause, we need first to condition on $Y_{t+w_i-1}\equiv Y_{t+1}$, the past of the $Y_{t+w_i}$. Following, we need to condition to one node from each time series $\bX \setminus X^i$ that enter $Y_{t+w_i-1}\equiv Y_{t+1}$ (which is a collider) to avoid all the dependencies that might be created by conditioning on it. It is enough to condition only on these nodes for the following reason: If a node $X^{j \not=i}$ has a $w_j$ lag-dependency with $Y$, then there is an (un)directed path from $X^j_{t+w_{ij}-1}$ to $Y_{t+w_i-1}$. If this path is a confounding one, then conditioning on $X^j_{t+w_{ij}-1}$ is not necessary, but also not harmful, because the future of this time series in the full graph is still independent of $Y_{t+w_i}$. This independence is forced by the fact that the $X^j_{t+w_{ij}}$ is a collider because of the stationarity of graphs and this collider is by construction \textit{not} in the conditioning set. If $X^j, j \not=i$ is connected with $Y_{t+w_i-1}$ via a directed link (as in fig. \ref{summarygraph}), then conditioning on $X^j_{t+w_{ij}-1}$ is necessary to block the parallel path created by its future values $X^j_{t+w_{ij}-1} \rightarrow X^j_{t+w_{ij}} \dashrightarrow Y_{t+v}$. Based on this idea of isolating the path of interest, we build the conditioning set as described in Theorem \ref{theorem1a}/\ref{theorem1b} and its almost converse Theorem \ref{theorem2}, where we prove the necessity and sufficiency of their conditions.

\begin{subtheorem}{[Sufficient conditions for a direct or indirect sg-unconfounded cause of $Y$ in single-lag dependency graphs]}\label{theorem1a}
Assuming \ref{CMC}-\ref{acyclic} and single-lag dependency graphs, let $w_i$ be the minimum lag (see \ref{lagdefinition}) between $X^i$ and $Y$. Further, let  $w_{ij}:=w_{i}-w_{j}$. Then, for every time series $X^i \in \boldsymbol{X}$ we define a conditioning set $\boldsymbol{S^i}= \{X^1_{t+w_{i1}-1}, X^2_{t+w_{i2}-1},$ $..., X^{i-1}_{t+w_{ij}-1}, X^{i+1}_{t+w_{ij}-1}, ..., X^n_{t+w_{in}-1}\}$.

If
\begin{equation}\tag{1}\label{eq:1} 
X^i_{t} \not \!\perp\!\!\!\perp Y_{t+w_i} \mid \{\boldsymbol{S^i}, Y_{t+w_i-1}\} 
\end{equation}
and
\begin{equation}\tag{2}\label{eq:2} 
X^i_{t-1} \!\perp\!\!\!\perp Y_{t+w_i} \mid \{\boldsymbol{S^i}, X^i_{t}, Y_{t+w_i-1}\} 
\end{equation}
are true, then
\[ X^i_{t} \dashrightarrow Y_{t+w_i}\] and the path between the two nodes is sg-unconfounded.
\end{subtheorem}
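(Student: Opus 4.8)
The plan is to work entirely at the level of d-separation, using causal faithfulness (\ref{faith}) to turn the dependence \ref{eq:1} and the independence \ref{eq:2} into statements about active and blocked paths, and the causal Markov condition (\ref{CMC}) for the converse implications. The engine of the argument is the guaranteed edge $X^i_{t-1}\rightarrow X^i_t$ (assumption \ref{timeDependencyFromt_1}), which makes $X^i_{t-1}$ a fixed parent of $X^i_t$ and thereby turns $X^i_t$ into a \emph{collider test}: conditioning on $X^i_t$ opens a path through it precisely when $X^i_t$ receives an arrow from a common cause it shares with $Y_{t+w_i}$. First I would record the relevant parent structure. By stationarity (A4) together with \ref{notime_s} and the single-lag assumption (\ref{singlelags}), the parents of the node $Y_{t+w_i-1}$ that lie in the other observed series $X^{j\neq i}$ are exactly the nodes $X^j_{t+w_{ij}-1}$ gathered in $\boldsymbol{S^i}$; and by \ref{noInverseTime} (no backward-in-time arrows) there is, for $w_i\ge 1$, no directed path from $Y_{t+w_i}$ to $X^i_t$.

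Next I would establish the existence claim $X^i_t\dashrightarrow Y_{t+w_i}$. By faithfulness, \ref{eq:1} yields a path $\pi$ between $X^i_t$ and $Y_{t+w_i}$ that is active given $\{\boldsymbol{S^i},Y_{t+w_i-1}\}$, and I would classify $\pi$ by arguing that the conditioning set kills every ``nuisance'' route. Conditioning on $Y_{t+w_i-1}$ blocks any path entering $Y_{t+w_i}$ through its own past (by \ref{notime_s} the unique own-past parent is $Y_{t+w_i-1}$, a non-collider on such a route), while the fact that $Y_{t+w_i-1}$ is now a conditioned collider can only open routes that leave it through one of its parents $X^j_{t+w_{ij}-1}$, each of which is itself in $\boldsymbol{S^i}$ and re-blocks the route as a conditioned non-collider. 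Combining this with \ref{noInverseTime} and acyclicity \ref{acyclic}, the only paths that can remain active are a directed path $X^i_t\dashrightarrow Y_{t+w_i}$ or a confounding path $X^i_t\dashleftarrow C\dashrightarrow Y_{t+w_i}$ in the sense of \ref{t5}.

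Then I would rule out the confounding alternative using \ref{eq:2} and the collider handle. If a confounding path $X^i_t\dashleftarrow C\dashrightarrow Y_{t+w_i}$ were present, I would prepend the edge $X^i_{t-1}\rightarrow X^i_t$ to obtain $X^i_{t-1}\rightarrow X^i_t\dashleftarrow C\dashrightarrow Y_{t+w_i}$, on which $X^i_t$ is a collider. Conditioning additionally on $X^i_t$ activates this collider, and since the $C$-to-$Y_{t+w_i}$ segment remains active (the delicate point addressed below), the whole path becomes active, giving $X^i_{t-1}\not \!\perp\!\!\!\perp Y_{t+w_i}\mid\{\boldsymbol{S^i},X^i_t,Y_{t+w_i-1}\}$ and contradicting \ref{eq:2}. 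Hence no such confounding path exists; by the classification above $\pi$ is then forced to be directed, which yields $X^i_t\dashrightarrow Y_{t+w_i}$, and the absence of any confounding path between the two nodes gives, via the summary-graph definitions \ref{t5}--\ref{t7}, that the causal path is sg-unconfounded.

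The hard part will be the path bookkeeping hidden inside the two middle steps. For the existence step I must show rigorously that $\{\boldsymbol{S^i},Y_{t+w_i-1}\}$ blocks \emph{all} nuisance paths -- those threading the past of $Y$, those opened at the conditioned collider $Y_{t+w_i-1}$, and longer zigzags with several activated colliders -- so that the active paths reduce exactly to the directed and confounding types; and for the exclusion step I must prove the dual fact that conditioning on $X^i_t$ in \ref{eq:2} never accidentally re-blocks the $C$-to-$Y_{t+w_i}$ segment of a confounding path. Both rely delicately on the single-lag assumption (\ref{singlelags}): it guarantees that one node per series at the lag $w_{ij}$ lands on the relevant parent of the collider $Y_{t+w_i-1}$, so that a single conditioning node per series simultaneously suffices to re-block the collider-opened routes yet fails to intercept the genuine confounding segment. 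Without single lags the same set $\boldsymbol{S^i}$ could either leave a nuisance path open or sever a confounding path, and the clean dichotomy underpinning the whole argument would collapse.
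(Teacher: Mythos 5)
Your overall strategy is the paper's own argument read in the contrapositive direction: the paper assumes the conclusion fails and shows one of \ref{eq:1}, \ref{eq:2} is then violated, while you assume \ref{eq:1} and \ref{eq:2}, extract an active path via faithfulness, and use the guaranteed edge $X^i_{t-1}\rightarrow X^i_t$ from \ref{timeDependencyFromt_1} as a collider test to exclude the confounding alternative. The mechanism (turning $X^i_t$ into a conditioned collider so that a confounder of $X^i_t$ and $Y_{t+w_i}$ would reopen a path from $X^i_{t-1}$) is exactly the paper's case \ref{graph_case1_part1}, and your use of the construction of $\boldsymbol{S^i}$ to re-block routes opened at the conditioned collider $Y_{t+w_i-1}$ matches the paper's treatment of condition \ref{eq:1}.

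There is, however, one concrete gap in your exclusion step. You write that given a confounding path $X^i_t\dashleftarrow C\dashrightarrow Y_{t+w_i}$ you would ``prepend the edge $X^i_{t-1}\rightarrow X^i_t$'' to make $X^i_t$ a collider. This construction fails when the directed segment $C\dashrightarrow X^i_t$ itself passes through $X^i_{t-1}$ (recall that $\dashrightarrow$ in the sense of \ref{t4} may contain intermediate nodes, and $X^i_{t-1}\rightarrow X^i_t$ is always an available last step of such a segment): the concatenation then visits $X^i_{t-1}$ twice and is not a path, so no collider is created at $X^i_t$. The paper handles this separately as case \ref{graph_case1_part2}: if $X^i_{t-1}$ lies on the confounding path, then the subpath $X^i_{t-1}\dashleftarrow C\dashrightarrow Y_{t+w_i}$ already connects $X^i_{t-1}$ to $Y_{t+w_i}$ without passing through $X^i_t$, so adding $X^i_t$ to the conditioning set cannot block it and \ref{eq:2} is still violated. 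You need this second branch for the dichotomy to close; as written, a confounder acting on $X^i_t$ only through $X^i_{t-1}$ would slip past your argument. The remaining issues you flag (that the $C$-to-$Y_{t+w_i}$ segment stays active after conditioning on $X^i_t$, and that $\boldsymbol{S^i}$ blocks all nuisance routes) resolve in your framework essentially as you anticipate, since the active path supplied by \ref{eq:1} by definition contains no conditioned non-collider and cannot revisit its own endpoint $X^i_t$.
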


\begin{proof}\textbf{(Proof by contradiction)}\mbox{}\\ 
We need to show that in single-lag dependency graphs, if $X^i_{t} \not\dashrightarrow Y_{t+w_i}$ or if the path $X^i_{t}\dashrightarrow Y_{t+w_i}$ is sg-confounded then at least one of the conditions \ref{eq:1} and \ref{eq:2} is violated. 

First assume that there is no directed path between $X^i_{t}$ and $Y_{t+w_i}$: $X^i_{t} \not\dashrightarrow Y_{t+w_i}$. Then, there is a confounding path $X^i_{t} \dashleftarrow Q^j_{t'} \dashrightarrow Y_{t+w_i}, t'\leq t$ without any colliders. (Colliders cannot exist in the path by the definition of the lag \ref{lagdefinition}.) 
In that case we will show that either condition \ref{eq:1} or \ref{eq:2} is violated.
If all the existing confounding paths $X^i_{t} \dashleftarrow Q^j_{t'} \dashrightarrow Y_{t+w_i}, t'\leq t$ contain an observed confounder $Q^j_{t'}\equiv X^j_{t'} \in \{\boldsymbol{S^i}, Y_{t+w_i-1}\}$ (there can be only one confounder since in this case there are no colliders in the path), then condition \ref{eq:1} is violated, because we condition on $X^j_{t'}$ which d-separates $X^i_t$ and $Y_{t+w_i}$.\label{graph_case1}
If in all the existing confounding paths the confounder node $Q^j_{t'} \not\in \{\boldsymbol{S^i}, Y_{t+w_i-1}\}, t'\leq t$ but some observed non-collider node is in the path and this node belongs to $\{\boldsymbol{S^i}, Y_{t+w_i-1}\}$, then condition \ref{eq:1} is violated, because we condition on $\boldsymbol{S^i}$ which d-separates $X^i_t$ and $Y_{t+w_i}$.
If there is at least one confounding path and its confounder node does no belong in $\{\boldsymbol{S^i}, Y_{t+w_i-1}\}$ and no other observed (non-collider or descendant of collider) node which is in the path belongs in $\{\boldsymbol{S^i}, Y_{t+w_i-1}\}$ then condition \ref{eq:2} is violated for the following reasons:
Let's name $p1: X^i_{t} \dashleftarrow Q^j_{t'} \dashrightarrow Y_{t+w_i}, t'\leq t$.
We know the existence of the path $p2: X^i_{t-1} \rightarrow X^i_{t}$, due to assumption \ref{timeDependencyFromt_1}.

\begin{itemize}
\itemsep0em 
\item [\mylabel{graph_case1_part1}{(1I)}] If $p1$ and $p2$ have $X^i_{t}$ in common, then $X^i_{t}$ is a collider. Thus, adding $X^i_{t}$ in the conditioning set would unblock the path between $X^i_{t-1}$ and $Y_{t+w_i}$.

\item [\mylabel{graph_case1_part2}{(1II)}] If $p1$ and $p2$ have $X^i_{t-1}$ in common, that means $X^i_{t-1}$ lies on $p1$. Thus $X^i_{t}$ is not in the path from $X^i_{t-1}$ to $Y_{t+w_i}$ and hence adding $X^i_t$ to the conditioning set could not d-separate $X^i_{t-1}$ and $Y_{t+w_i}$.
\end{itemize}
In both cases condition \ref{eq:2} is violated.\\
Now, assume that there is a directed path $X^i_{t} \dashrightarrow Y_{t+w_i}$ but it is ``sg-confounded'' (there exist also a parallel confounding path $p3: X^i_{t} \dashleftarrow Q^j_{t'} \dashrightarrow Y_{t+w_i}, t'\leq t$. Then, if $p3$ and $p2$ have $X^i_t$ in common, then condition \ref{eq:2} is violated due to \ref{graph_case1_part1}. If $p3$ and $p2$ have $X^i_{t-1}$ in common, then condition \ref{eq:2} is violated due to \ref{graph_case1_part2}.
In all the above cases we show that if conditions \ref{eq:1} and \ref{eq:2} hold true in single-lag dependency graphs, then $X^i_{t}$ is an ``sg-unconfounded'' direct or indirect cause of $Y_{t+w_i}$.
\end{proof}

\begin{subtheorem}{[Sufficient conditions for a (possibly confounded) direct or indirect cause of $Y$ in multi-lag dependency graphs]}\label{theorem1b}
Assuming \ref{CMC}-\ref{acyclic}, and allowing multi-lag dependency graphs, let $w_i$ be the minimum lag (see \ref{lagdefinition}) between $X^i$ and $Y$. Further, let  $w_{ij}:=w_{i}-w_{j}$. Then, for every time series $X^i \in \boldsymbol{X}$ we define a conditioning set $\boldsymbol{S^i}= \{X^1_{t+w_{i1}-1}, X^2_{t+w_{i2}-1},$ $..., X^{i-1}_{t+w_{ij}-1}, X^{i+1}_{t+w_{ij}-1}, ..., X^n_{t+w_{in}-1}\}$.

If conditions \ref{eq:1} and \ref{eq:2} of Theorem \ref{theorem1a} hold true for the pair $X^i_{t}, Y_{t+w_i}$, then 
\[ X^i_{t} \dashrightarrow Y_{t+w_i}\]
\end{subtheorem}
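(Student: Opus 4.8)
The plan is to prove the contrapositive: if there is no directed path $X^i_t \dashrightarrow Y_{t+w_i}$, then at least one of conditions \ref{eq:1}, \ref{eq:2} fails. This is exactly the first half of the proof of Theorem \ref{theorem1a} (the case ``First assume that there is no directed path''), and the key observation is that this half never invokes the single-lag hypothesis. That hypothesis is needed only in the second half of the proof of Theorem \ref{theorem1a}, where it serves to exclude a parallel confounding path $p_3$ and thereby upgrade ``a directed path exists'' to ``an sg-unconfounded directed path exists.'' Since Theorem \ref{theorem1b} drops the unconfoundedness claim, that second half is simply discarded, and with it the need for single-lag dependencies; the remaining argument then transfers to multi-lag graphs with no change other than the weaker conclusion drawn at the end.

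Concretely, I would first note that because $w_i$ is a lag for $(X^i,Y)$, definition \ref{lagdefinition} furnishes a collider-free path between $X^i_t$ and $Y_{t+w_i}$, regardless of whether the pair admits one lag or several. If no directed path exists, this collider-free path cannot be directed and must therefore be a confounding path $p_1: X^i_t \dashleftarrow Q^k_{t'} \dashrightarrow Y_{t+w_i}$, $t'\le t$, with no colliders. I would then rerun the three-way case split of Theorem \ref{theorem1a}: if every confounding path has its observed confounder, or some observed non-collider, inside $\{\boldsymbol{S^i}, Y_{t+w_i-1}\}$, then the set d-separates $X^i_t$ from $Y_{t+w_i}$ and condition \ref{eq:1} is violated; otherwise some confounding path escapes the set, and combining it with the edge $p_2: X^i_{t-1}\rightarrow X^i_t$ from assumption \ref{timeDependencyFromt_1} either makes $X^i_t$ a collider (case \ref{graph_case1_part1}) or places $X^i_{t-1}$ on $p_1$ (case \ref{graph_case1_part2}), so that conditioning on $X^i_t$ in \ref{eq:2} opens a path between $X^i_{t-1}$ and $Y_{t+w_i}$ and condition \ref{eq:2} is violated. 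These sub-cases carry over verbatim from Theorem \ref{theorem1a}.

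I expect the main obstacle to be confirming that this case split stays exhaustive once multiple lags are permitted. I would need to check that the additional collider-free paths furnished by the extra lags of $X^i$ or of the other series $X^j$ cannot produce a confounding path that is at once left unblocked by $\{\boldsymbol{S^i}, Y_{t+w_i-1}\}$ and yet avoids routing through $X^i_t$ or $X^i_{t-1}$. The delicate point is that $\boldsymbol{S^i}$ is built from the minimum lags $w_j$, so I would verify that the future colliders $X^j_{t+w_{ij}}$ remain outside the conditioning set, exactly as the stationarity argument preceding the theorems requires; this keeps the alternative-lag paths through the other series blocked and prevents them from interfering with the dichotomy above.
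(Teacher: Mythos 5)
Your proposal is correct and matches the paper's route: the paper's appendix proof of Theorem \ref{theorem1b} likewise reuses the ``no directed path'' half of the Theorem \ref{theorem1a} argument (confounding path blocked by $\{\boldsymbol{S^i}, Y_{t+w_i-1}\}$ violating condition \ref{eq:1}, or else the collider at $X^i_t$ / the placement of $X^i_{t-1}$ on the path violating condition \ref{eq:2}), observing that the single-lag hypothesis is only needed to rule out a parallel confounding path and hence can be dropped along with the sg-unconfoundedness claim. Your added check that the future colliders $X^j_{t+w_{ij}}$ stay outside the conditioning set is exactly the stationarity point the paper relies on.
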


We can think of $\boldsymbol{S^i}$ as the set that contains only one node from each time series $X^j$ and this node is the one that enters the node $Y_{t+w_i-1}$ due to a directed or confounded path (if $w_j$ exists then the node is the one at $t+w_{ij}-1$).\\

Proof of Theorem \ref{theorem1b} is provided in Section 2 of the Appendix, following similar logic with the proof of Theorem \ref{theorem1a}.

\begin{remark}\label{anylag}
Theorem \ref{theorem1b} conditions hold for \textit{any} lag as defined in \ref{lagdefinition}; not only for the minimum lag. The reason why we refer to the minimum lag in \ref{theorem1b} is to have conditions closer to its converse Theorem \ref{theorem2}.
\end{remark}

\begin{theorem}{[Necessary conditions for a direct sg-unconfounded cause of $Y$ in single-lag dependency graphs]}\label{theorem2}

Let the assumptions and the definitions of Theorem \ref{theorem1a} hold, in addition to Assumptions \ref{targetHasNoEffects}-\ref{memorylesshiddendirect}. 

If $X^i_{t}$ is a direct, ``sg-unconfounded'' cause of $\, Y_{t+w_i}$ ($X^i_{t} \rightarrow Y_{t+w_i}$), then conditions
\ref{eq:1} and \ref{eq:2} of Theorem \ref{theorem1a} hold.
\end{theorem}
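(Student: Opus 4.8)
The plan is to reduce both conditions to d-separation statements in the full time graph---which is legitimate since Assumptions \ref{CMC} and \ref{faith} make conditional independence equivalent to d-separation---and then read them off from the assumed direct, sg-unconfounded edge $X^i_{t} \rightarrow Y_{t+w_i}$ (where ``direct'' allows only latent intermediates, by \ref{t3}).

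Condition \ref{eq:1} is the easy half. The directed path $X^i_{t} \dashrightarrow Y_{t+w_i}$ has only latent intermediate nodes, none of which---nor either endpoint---lies in $\{\boldsymbol{S^i}, Y_{t+w_i-1}\}$; hence it stays open and the two nodes are d-connected. Faithfulness forbids an exact cancellation, giving the stated dependence. I would dispatch this in a couple of lines.

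The substance is Condition \ref{eq:2}: I must show $X^i_{t-1}$ and $Y_{t+w_i}$ are d-separated given $C := \{\boldsymbol{S^i}, X^i_{t}, Y_{t+w_i-1}\}$. I would fix an arbitrary path $\pi$ from $X^i_{t-1}$ to $Y_{t+w_i}$ and argue it is blocked, organizing by its final node: since $Y_{t+w_i}$ is a sink (\ref{targetHasNoEffects}), $\pi$ must arrive through a parent $P$ of $Y_{t+w_i}$, namely $Y_{t+w_i-1}$, $X^i_{t}$, a node $X^j_{t+w_{ij}}$ of another observed direct cause, or a latent parent $U_{t+w_i}$. If $P = X^i_{t}$ or $P = Y_{t+w_i-1}$, then (using $Y_{t+w_i-1}\rightarrow Y_{t+w_i}$ from \ref{timeDependencyFromt_1}) that parent carries an outgoing edge to $Y_{t+w_i}$ along $\pi$, so it is a conditioned non-collider and blocks $\pi$.

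The real work is the remaining two parents together with reactivation at the conditioned collider $Y_{t+w_i-1}$. For $P = X^j_{t+w_{ij}}$, I would use \ref{notime_s} to note that its only same-series parent is $X^j_{t+w_{ij}-1} \in \boldsymbol{S^i}$, so $\pi$ either runs through this conditioned non-collider (blocked) or must reach $X^j_{t+w_{ij}}$ via a node whose onward continuation back to $X^i_{t-1}$ would realize a summary-graph common cause of $X^i$ and $Y$---excluded by sg-unconfoundedness. For $P = U_{t+w_i}$, and for every sub-path that is opened by conditioning on the collider $Y_{t+w_i-1}$, I would invoke Assumption \ref{memorylesshiddendirect}: a latent variable adjacent to $Y$ or to the observed collider $Y_{t+w_i-1}$ is memoryless and single-lag, so it cannot relay influence from slice $t+w_i-1$ forward to $Y_{t+w_i}$, while $\boldsymbol{S^i}$ simultaneously closes every observed parent of $Y_{t+w_i-1}$ coming from the other series. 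I expect the bookkeeping here---enumerating which opened collider admits which relay and checking that each relay is severed either by a node of $\boldsymbol{S^i}$ or by the memoryless/single-lag property---to be the main obstacle; it is handled by running the $X^i_{t}$-collider dichotomy of \ref{graph_case1_part1}/\ref{graph_case1_part2} forward rather than by contradiction.
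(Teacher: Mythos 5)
Your overall strategy---reduce both conditions to d-separation via \ref{CMC}/\ref{faith}, dispatch condition \ref{eq:1} from the open directed path through latent-only intermediates, and establish condition \ref{eq:2} by showing every path from $X^i_{t-1}$ to $Y_{t+w_i}$ is blocked---is the same case analysis the paper runs; the paper merely phrases it as a contradiction and organizes the cases by \emph{how} a path could be d-connected (activated collider at $Y_{t+w_i-1}$, at $X^i_t$, at a node of $\boldsymbol{S^i}$, via a descendant of a collider, or a collider-free path) rather than by the parent of $Y_{t+w_i}$ through which the path arrives. The assumptions \ref{targetHasNoEffects}--\ref{memorylesshiddendirect} are used in identical roles. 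The problem is that you defer precisely the cases that constitute the substance of the paper's proof, and where you do commit to an argument the case split is not exhaustive.

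Concretely: for $P = X^j_{t+w_{ij}}$ your dichotomy is ``$\pi$ passes through $X^j_{t+w_{ij}-1}\in\boldsymbol{S^i}$'' or ``$\pi$ reaches $X^j_{t+w_{ij}}$ by a back-door, realizing a summary-graph confounder of $X^i$ and $Y$.'' This misses the fully directed path $X^i_{t-1}\dashrightarrow X^j_{t+w_{ij}}\rightarrow Y_{t+w_i}$ whose only observed intermediate is $X^j_{t+w_{ij}}$ itself: it contains no node of the conditioning set and no fork, so it is neither blocked nor evidence of sg-confounding. The exclusion has to come from the single-lag assumption \ref{singlelags}, since such a path would give the pair $(X^i,Y)$ a second lag $w_i+1$ in the sense of \ref{lagdefinition} alongside $w_i$; you invoke single-lag only for latent relays, so this case falls through (the paper uses single-lag for exactly this purpose in its collider-free case and in case (a1)). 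Second, you treat collider activation only at the conditioned nodes $Y_{t+w_i-1}$ and $X^i_t$, but a path can also be opened by a collider $G$ elsewhere---possibly latent---one of whose \emph{descendants} lies in $\{\boldsymbol{S^i},X^i_{t},Y_{t+w_i-1}\}$; this is the paper's case (a4) and requires its own argument combining \ref{singlelags}, \ref{memorylesshiddendirect} and the construction of $\boldsymbol{S^i}$. Until these are filled in, the ``bookkeeping'' you set aside \emph{is} the proof.
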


\begin{proof}\textbf{(Proof by contradiction)}\mbox{}\\ 
Assume that the direct path $X^i_{t} \rightarrow Y_{t+w_i}$ exists and it is unconfounded. Then, condition \ref{eq:1} is true. Now assume that condition \ref{eq:2} does not hold. 
This would mean that the set $\{\boldsymbol{S^i}, X^i_{t}, Y_{t+w_i-1}\}$ does not d-separate $X^i_{t-1}$ and $Y_{t+w_i}$. Note that a path $p$ is said to be \textit{d-separated} by a set of nodes in $Z$ if and only if $p$ contains a chain or a fork such that the middle node is in $Z$, or if $p$ contains a collider such that neither the middle node nor any of its descendants are in the $Z$. 
Hence, a violation of condition \ref{eq:2} would imply that (a) there is some middle node or descendant of a collider in $\{\boldsymbol{S^i}, X^i_{t}, Y_{t+w_i-1}\}$ and no non-collider node in this path belongs to this set, or (b) that there is a collider-free path between $X^i_{t-1}$ and $Y_{t+w_i}$ that does not contain any node in $\{\boldsymbol{S^i}, X^i_{t}, Y_{t+w_i-1}\}$.
\begin{itemize}
\itemsep0em 
\item[(a)] \textit{There is some middle node or descendant of a collider in $\{\boldsymbol{S^i}, X^i_{t}, Y_{t+w_i-1}\}$ and no non-collider node in this path belongs to this set:}\\
\textit{(a1:) If there is at least one path $p1: X^i_{t-1}$ - -$\dashrightarrow Y_{t+w_i-1} \dashleftarrow$ - - $Y_{t+w_i}$ where $Y_{t+w_i-1}$ is a middle node of a collider 
 and none of the non-collider nodes in the path belongs to $\{\boldsymbol{S^i}, X^i_t\}$}:
Such a path could be formed only if in addition to $X^i$ some $Q^j_{t'}$ directly caused $Y$. Then $p1: X_{t-1}$ - -$\dashrightarrow Y_{t+w_i-1} \dashleftarrow Q^j_{t'} \rightarrow Y_{t+w_i}, t'\leq t+w_i$. (Due to our assumption for single-lag dependencies (see \ref{singlelags}) a path of the form $X_{t-1}$ - -$\dashrightarrow Y_{t+w_i-1} \dashleftarrow X^i_s - - Y_{t+w_i}$ could not exist). Then, due to stationarity of graphs the node $Q^j_{t'-1}$ will enter $Y_{t+w_i-1}$. If this $Q^j_{t'}$ is hidden ($Q^j_{t'}\equiv U^j_{t'}$), then due to assumption \ref{memorylesshiddendirect} this time series will be memoryless ($U^j_{t'-1}\not \rightarrow U^j_{t'}$). Therefore, the collider $Y_{t+w_i-1}$ in the conditioning set will not unblock any path between $X^i_{t-1}$ and $Y_{t+w_i}$ that could contain $U^j_{s}, s > t'$. If $Q^j_{t'}$ is observed ($Q^j_{t'}\equiv X^j, j\not=i$) then due to assumption \ref{timeDependencyFromt_1} the path $p1$ will be $X^i_{t-1}$ - -$\dashrightarrow Y_{t+w_i-1} \dashleftarrow X^j_{t+w_{ij}-1} \rightarrow X^j_{t+w_{ij}} \dashrightarrow Y_{t+w_i}$. However, this path is always blocked by $X^j_{t+w_{ij}-1} \in \boldsymbol{S^i}$ due to the rule we use to construct $\boldsymbol{S^i}$. That means a non-collider node in the conditioning set will necessarily be in the path $p1$, which contradicts the original statement. 

\textit{(a2:) If there is at least one path $p2: X^i_{t-1}$ - -$\dashrightarrow X^i_{t} \dashleftarrow$- - $Y_{t+w_i}$ where $X^i_{t}$ is a middle node of a collider and none of the non-collider nodes in the path belongs to $\{\boldsymbol{S^i}, Y_{t+w_i-1}\}$}: This could only mean that there is a confounder between the target $Y_{t+w_i}$ and $X^i_{t}$. However this contradicts that $X^i_{t} \rightarrow Y_{t+w_i}$ is ``sg-unconfounded''.

\textit{(a3:) If there is at least one path $p3: X^i_{t-1}$ - -$\dashrightarrow X^j_{t'} \dashleftarrow$- - $Y_{t+w_i}$ where $X^j_{t'} \in \boldsymbol{S^i}$ with  $t'\leq t+w_i-1$ is a middle node of a collider and no non-collider node in the path belongs to $\{\boldsymbol{S^i}\setminus X^j_{t'}, X^i_{t}, Y_{t+w_i-1}\}$}: 
In this case, $t' \equiv t+ w_{ij}-1$ because $X^j_{t'} \in \boldsymbol{S^i}$. By construction of $\boldsymbol{S^i}$ all the observed nodes in $\bX \setminus X^i$ that enter the node $Y_{t+w_i-1}$ belong in $\boldsymbol{S^i}$. That means that $X^j_{t'}$ enters the node $Y_{t+w_i-1}$. Hence, in the path $p3$ $Y_{t+w_i-1}$ will necessarily be a non-collider node which belongs to the conditioning set. This contradicts the original statement ``and no non-collider node in the path belongs to $\{\boldsymbol{S^i}\setminus X^j_{t'}, X^i_{t}, Y_{t+w_i-1}\}$''.

\textit{(a4:) If a descendent $D$ of a collider $G$ in the path $p4: X^i_{t-1}$ - -$\dashrightarrow G \dashleftarrow$ - - $C \dashrightarrow Y_{t+w_i}$ belongs to the conditioning set $\{\boldsymbol{S^i}, X^i_{t}, Y_{t+w_i-1}\}$ and no non-collider node in the path belongs to it}: Due to the single-lag dependencies assumption, $w_C \equiv w_i$ otherwise there are multiple-lag effects from $C$ to $Y$. That means that, independent of $C$ being hidden or not, the $C$ in the collider path will enter the node $Y_{t+w_i-1}$. If $C \in \bX$ then because $C$  enter the node $Y_{t+w_i-1}$, $C \in \{\boldsymbol{S^i}, X^i_{t}, Y_{t+w_i-1}\}$. In the first case $Y_{t+w_i-1}$ only and in the latter case also $C$ are a non-collider variable in the path $p4$ that belongs to the conditioning set, which contradicts the statement of (a4). If the collider $G \in \bX$, as explained in (a3) at least one non-collider variable in the path will belong in the conditioning set, which contradicts the statement (a4). Finally, if $G$ and $C$ are hidden, if $w_D \equiv w_C$ then the node $Y_{t+w_i-1}$ is necessarily in the path as a pass-through node, which contradicts the statement (a4). If $w_D \not\equiv w_C$ then the single-lag assumption is violated.

\item[(b)] \textit{There is a collider-free path between $X^i_{t-1}$ and $Y_{t+w_i}$ that does not contain any node in $\{\boldsymbol{S^i}, X^i_{t}, Y_{t+w_i-1}\}$}: \\
Such a path would imply the existence of a hidden confounder between $X^i_{t-1}$ and $Y_{t+w_i}$ or the existence of a direct edge from $X_{t-1}$ to $Y_{t+w_i}$. The former cannot exist because we know that $X_{t}$ is an sg-unconfounded direct cause of $Y_{t+w_i}$. The latter would imply that there are multiple lags of direct dependency between $X_{t}$ and $Y_{t+w_i}$ which contradicts the assumption of single-lag dependencies.
\end{itemize}
Therefore we showed that whenever $X^i_{t} \rightarrow Y_{t+w_i}$ is an sg-unconfounded causal path, conditions \ref{eq:1} and \ref{eq:2} are necessary.
\end{proof}

Since it is unclear how to identify the lag in \ref{lagdefinition}, we introduce the following lemmas for the detection of the minimum lag that we require in the theorems. We provide the proofs of the lemmas in Appendix Sec. 2.

\begin{lemma}\label{minlagdirect}
If the paths between $X^j$ and $Y$ are directed then the minimum lag $w_j$ as defined in \ref{lagdefinition} coincides with the minimum non-negative integer $w_j'$ for which
$X^j_t \not\!\perp\!\!\!\perp Y_{t+w_j'} \mid X^j_{\text{past}(t)}$.
The only case where $w_j' \not \equiv w_j$ is when there is a confounding path between $X^j$ and $Y$ that contains a node from a third time series with memory. In this case $w_j'=0$.
\end{lemma}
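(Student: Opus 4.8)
The plan is to turn both $w_j$ and $w_j'$ into statements about $d$-connection and then prove $w_j'\le w_j$ and $w_j'\ge w_j$ separately. By the causal Markov condition \ref{CMC} and faithfulness \ref{faith}, $X^j_t \not \!\perp\!\!\!\perp Y_{t+v}\mid X^j_{\mathrm{past}(t)}$ holds if and only if some path $d$-connects $X^j_t$ and $Y_{t+v}$ given $X^j_{\mathrm{past}(t)}$; hence $w_j'$ is the smallest non-negative $v$ for which such an active path exists. Under the hypothesis that every path between $X^j$ and $Y$ is directed, the goal is to show this smallest $v$ is exactly the smallest lag $w_j$ of Definition \ref{lagdefinition}.

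For the upper bound $w_j'\le w_j$, I would use the clean minimum-lag directed path $\pi\colon X^j_t\dashrightarrow Y_{t+w_j}$ that witnesses $w_j$. By Definition \ref{lagdefinition} no intermediate node of $\pi$ lies in the series $X^j$, so no node of $\pi$ belongs to $X^j_{\mathrm{past}(t)}$, and being directed $\pi$ is collider-free. Thus $\pi$ is not blocked by the conditioning set and stays active, producing a dependence at lag $w_j$, so $w_j'\le w_j$.

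For the lower bound $w_j'\ge w_j$, I would argue by contradiction from an active path $p$ at some lag $v<w_j$. First I would show $p$ must leave $X^j_t$ by an out-edge and can never visit $X^j_{t-1}$: any traversal of the memory edge $X^j_{t-1}\to X^j_t$ of \ref{timeDependencyFromt_1} puts the conditioned non-collider $X^j_{t-1}$ on $p$ and blocks it. Next I would use the no-confounder hypothesis: a fork on an $X^j_t$–$Y_{t+v}$ path would create a common ancestor of its two endpoints, i.e. a confounding path, which is excluded, so the only way for $p$ to be non-directed is through an activated collider—a possibility I defer to the obstacle below—leaving $p$ a directed path $X^j_t\dashrightarrow Y_{t+v}$. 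Finally, invoking stationarity (A4) and the unit-lag memory structure \ref{notime_s}, I would contract any intermediate node of $p$ lying in $X^j$ (shifting the tail by stationarity and re-checking the conditions of Definition \ref{lagdefinition}) to obtain a clean directed path of lag $\le v<w_j$, contradicting the minimality of $w_j$; this gives $w_j'\ge w_j$ and hence $w_j'=w_j$.

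Finally I would treat the single exceptional case, in which the directed-paths hypothesis fails through a confounding path passing through a third series $U^r$ or $X^r$ that carries memory. Here conditioning on $X^j_{\mathrm{past}(t)}$ is powerless, because the confounder enters $X^j_t$ directly rather than through the past of $X^j$; by stationarity the common influence reaches $X^j_t$ and $Y_t$ at the same time, so these two nodes remain $d$-connected and $w_j'=0$, whereas the genuine causal delay obeys $w_j\ge 1$, giving $w_j'\neq w_j$ exactly as claimed. The hard part is the reduction in the lower bound: one must rule out activation through colliders whose descendants happen to sit in $X^j_{\mathrm{past}(t)}$, and must handle directed paths that traverse memory edges of a third series (these carry dependence yet are excluded from the lag count by Definition \ref{lagdefinition}). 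It is precisely the no-confounder hypothesis together with \ref{timeDependencyFromt_1}, \ref{notime_s} and stationarity that should force every such shorter active path to collapse to a clean directed path of smaller lag, and making this collapse rigorous is the technical heart of the argument.
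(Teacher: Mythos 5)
Your overall architecture is sound --- translating both quantities into statements about active paths via \ref{CMC}/\ref{faith}, proving $w_j'\le w_j$ from the clean witnessing path, and isolating the memory-confounder exception --- and the upper-bound half is essentially complete. The problem is that the lower bound, which is where the lemma actually lives, is not proved: you explicitly defer the collider case (``a possibility I defer to the obstacle below'') and your closing paragraph then concedes that ruling out collider activation and making the contraction of memory edges rigorous is ``the technical heart of the argument'' without supplying either step. As written, the proposal establishes only that some active path exists at lag $w_j$; it does not establish that no active path exists at any $v<w_j$, so $w_j'=w_j$ is not yet proved.

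The deferred steps can be closed, and you should close them. For colliders: by \ref{noInverseTime} every descendant of a node at time $s$ lives at time $\ge s$, so a collider at time $\ge t$ can never be activated by the conditioning set $X^j_{\text{past}(t)}$, whose members all lie strictly before $t$; a collider at time $<t$ that is activated must be an ancestor of some $X^j_s$, $s<t$, and then the edge of the path entering that collider from the $Y$-side exhibits a common ancestor of $X^j$ and $Y$ (iterate if the $Y$-side segment contains further colliders), i.e.\ a confounding path, contradicting the hypothesis that all paths between $X^j$ and $Y$ are directed. For the contraction: whenever a directed path $X^j_t\dashrightarrow Y_{t+v}$ uses a memory edge $Q^r_{s}\to Q^r_{s+1}$ of a third series, or passes internally through $X^j$ or $Y$, stationarity \ref{stationary graphs} lets you shift the tail back one step to get a directed path at lag $v-1$; iterating terminates in a path satisfying every clause of \ref{lagdefinition}, so the minimal $v$ admitting an active path is itself a lag and hence $v\ge w_j$. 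Finally, your justification of the exceptional case (``the common influence reaches $X^j_t$ and $Y_t$ at the same time'') is loose: the correct reason that $w_j'=0$ is that the confounder's memory chain links its copy feeding $X^j_t$ to its copy feeding $Y_{t}$ at every relative alignment of the two endpoints, producing an unblocked fork already at lag $0$ that conditioning on $X^j_{\text{past}(t)}$ cannot cut, while such a chain is excluded from defining a lag by \ref{lagdefinition}.
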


\begin{lemma}\label{lag0wj}
Theorems \ref{theorem1a}/\ref{theorem1b} and \ref{theorem2} are valid if the minimum lag $w_j$ as defined in \ref{lagdefinition} is replaced with $w_j'$ from lemma 1.
\end{lemma}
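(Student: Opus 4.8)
The plan is to exploit the fact, established in Lemma \ref{minlagdirect}, that $w_j'$ differs from the graphical lag $w_j$ only in one controlled situation. The quantity $w_j$ enters the statements of Theorems \ref{theorem1a}, \ref{theorem1b} and \ref{theorem2} in exactly two places: through the index $w_i$ of the tested pair $(X^i_t, Y_{t+w_i})$, and through the offsets $w_{ij}=w_i-w_j$ that locate the nodes $X^j_{t+w_{ij}-1}$ placed in the conditioning set $\boldsymbol{S^i}$. So it suffices to show that replacing every $w_j$ by $w_j'$ preserves all the structural facts about $\boldsymbol{S^i}$ on which the three proofs actually rely, namely that $\boldsymbol{S^i}$ holds exactly one node from each series $X^{j\neq i}$, that this node is the one entering the collider $Y_{t+w_i-1}$ whenever the $X^j$--$Y$ connection is directed, and that the ``future'' collider node of $X^j$ is by construction kept out of $\boldsymbol{S^i}$. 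The natural case split is the dichotomy supplied by Lemma \ref{minlagdirect}: either $w_j'=w_j$, or $w_j'=0\neq w_j$ (the latter only when the $X^j$--$Y$ connection is a confounding path through a third series \emph{with memory}).

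First I would dispose of the generic case. By Lemma \ref{minlagdirect}, for every series $X^j$ whose connection to $Y$ is directed we have $w_j'=w_j$, so the corresponding node of $\boldsymbol{S^i}$ is literally unchanged; this is precisely the regime in which the offsets are used in cases (a1), (a3), (a4) of the proof of Theorem \ref{theorem2} and in the analogous steps of Theorems \ref{theorem1a}/\ref{theorem1b}, so all those steps carry over verbatim. In particular, under the hypothesis of Theorem \ref{theorem2} the candidate satisfies $X^i_t\rightarrow Y_{t+w_i}$ along a directed, sg-unconfounded path, whence Lemma \ref{minlagdirect} gives $w_i=w_i'$ and the tested pair itself is unaffected by the substitution.

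It then remains to treat the single exceptional regime: a series $X^j$ connected to $Y$ only through a confounding path passing through a memoryful third series, for which $w_j'=0\neq w_j$. Here the node contributed to $\boldsymbol{S^i}$ shifts from $X^j_{t+w_{ij}-1}$ to $X^j_{t+w_i-1}$ (since $w_{ij}'=w_i-0$). I would argue this is harmless by reusing the confounded-path argument already given in the intuition preceding Theorem \ref{theorem1a}: because $X^j$ has \emph{no} directed connection into $Y$, there is no parallel directed path $X^j_{\cdot}\rightarrow X^j_{\cdot}\dashrightarrow Y_{t+w_i}$ that $\boldsymbol{S^i}$ is required to block, so no particular node of $X^j$ is \emph{needed}. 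What must be checked is that the replacement node is not \emph{harmful}: that $X^j_{t+w_i-1}$ is neither a collider on a path between $X^i_{t-1}$ (or $X^i_t$) and $Y_{t+w_i}$ nor a descendant of such a collider, and that the future node of $X^j$ whose collider status forces independence by stationarity still lies outside $\boldsymbol{S^i}$. Assumptions \ref{timeDependencyFromt_1} and the memorylessness granted by \ref{memorylesshiddendirect} are exactly what make this go through, since they prevent $X^j_{t+w_i-1}$ from being linked to $Y_{t+w_i}$ through the memory of the latent common cause.

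The step I expect to be the main obstacle is precisely this last verification for the confounded regime: showing that conditioning on the shifted node $X^j_{t+w_i-1}$ selected by $w_j'=0$ opens no new path between $X^i_{t-1}$ and $Y_{t+w_i}$ and closes none that the original analysis relied on remaining open. The difficulty is that $w_j'=0$ deliberately points to a ``wrong'' time index relative to the graphical lag, so one cannot invoke the entering-node bookkeeping used in the generic case; instead one must argue directly, using stationarity together with the memorylessness of Assumption \ref{memorylesshiddendirect}, that the collider structure underlying cases (a) and (b) of the proof of Theorem \ref{theorem2} (and the parallel steps for Theorems \ref{theorem1a}/\ref{theorem1b}) is left intact.
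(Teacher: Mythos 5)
Your decomposition is the right one and matches the natural structure of this lemma: split according to the dichotomy of Lemma \ref{minlagdirect}, observe that the substitution is vacuous whenever $w_j'=w_j$ (in particular for the tested candidate in Theorem \ref{theorem2}, since an sg-unconfounded direct cause cannot fall into the exceptional regime), and reduce everything to the single case $w_j'=0\neq w_j$, where the node contributed to $\boldsymbol{S^i}$ shifts from $X^j_{t+w_{ij}-1}$ to $X^j_{t+w_i-1}$. You also correctly recognize that the entering-node bookkeeping of cases (a1) and (a3) of Theorem \ref{theorem2} cannot be invoked for the shifted node, because in this regime $X^j$ does not enter $Y_{t+w_i-1}$ at all. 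The problem is that you stop exactly there: the ``harmlessness'' of conditioning on $X^j_{t+w_i-1}$ is the entire content of the lemma in the nontrivial case, and your proposal explicitly labels it as the expected obstacle rather than discharging it. As written this is a plan, not a proof; one still has to run the d-separation argument showing that $X^j_{t+w_i-1}$, which \emph{is} a collider (it receives $X^j_{t+w_i-2}\rightarrow X^j_{t+w_i-1}$ by \ref{timeDependencyFromt_1} and an arrowhead from the confounding path), does not unblock any path of the form $X^i_{t-1}\dashrightarrow X^j_{t+w_i-1}\dashleftarrow Q^k\dashrightarrow Y_{t+w_i}$ for conditions \ref{eq:2}, and does not block the causal path for condition \ref{eq:1}.

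There is also a substantive imprecision in the one tool you do invoke for that step. You appeal to ``the memorylessness granted by \ref{memorylesshiddendirect}'' to prevent $X^j_{t+w_i-1}$ from being linked to $Y_{t+w_i}$ through the memory of the latent common cause. But Lemma \ref{minlagdirect}'s exceptional regime is \emph{defined} by the presence of a third time series \emph{with} memory on the confounding path; Assumption \ref{memorylesshiddendirect} does not make that series memoryless, it only constrains which memoryful hidden series may exist (namely, none that affect $Y$ directly or attach to an observed collider). So the argument cannot be ``the confounder is memoryless''; it has to be a case analysis on whether the memoryful third series is observed (hence itself a member of $\bX$ whose own $\boldsymbol{S^i}$-node and the collider structure of its future values must be tracked) or hidden but reaching $Y$ only through observed intermediates. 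Until that case analysis is carried out, the lemma is not established.
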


Using the condition in Lemma \ref{minlagdirect} via lasso regression and the two conditions in Theorems \ref{theorem1a} and \ref{theorem2} we build an algorithm to identify direct and indirect causes on time series. The input is a 2D array $\bX$ (candidate time series) and a vector $Y$ (target), and the output a set with indices of the time series that were identified as causes. The source code is provided in the supplementary. The complexity of our algorithm is $\mathcal{O}(n)$ for $n$ candidate time series, assuming constant execution time for the conditional independence test.

\begin{algorithm}
\DontPrintSemicolon
\KwIn{$\bX,Y$.}
\KwOut{causes\_of\_R}
$n_{\text{vars}}$ = shape$(\bX, 1);$ causes\_of\_R$ = []$\\
$w = min\_lags(\bX, Y)$\\
\For{$i = 1$ \textbf{to} $n_{\text{vars}}$}{
$\mathbf{S_i} = \bigcup\limits_{j=1, j \not = i }^{n_{\text{vars}}} \{X^j_{t+w[i] - w[j]-1}\}$\\
	pvalue1 $= cond\_ind\_test(X^i_{t}, Y_{t+w[i]}, [\mathbf{S_i}, Y_{t+w[i]-1}])$\\
	\If{pvalue1 $<$ threshold1}{
	
			pvalue2 $= cond\_ind\_test(X^i_{t-1}, Y_{t+w[i]}, [\mathbf{S_i}, X^i_t, Y_{t+w[i]-1}])$\\
 			\If{pvalue2 $>$ threshold2 }{
 			causes\_of\_R $= [\text{causes\_of\_R}, X^i_t]$} 
 	}

 	}
\caption{\textit{SyPI} Algorithm for Theorems \ref{theorem1a}/\ref{theorem1b} and \ref{theorem2}.} 
\label{algo:sufAndNecessary}
\end{algorithm}

\section{EXPERIMENTS}
\subsection{Simulated experiments}\label{simulations}
To test our method, we build simulated full-time graphs, respecting the aforementioned assumptions. We sampled 100 random graphs for the following hyperparameters and their tested values: \# samples $\in (500, 1000, 2000, 3000)$, \# hidden variables $\in (0, 1, 2)$, \# observed variables $\in (1, 2, 3, 4, 5, 6, 7, 8)$, Bernoulli($p$) existence of edge among candidate time series $\in(0.1, 0.15, 0.2, 0.25)$, Bernoulli($p$) existence of edge between candidate time series and target series $\in(0.1, 0.2, 0.3)$, and noise variance $\in (10\%, 20\%, 30\%)$. We then calculate the false positive (FPR) and false negative rates (FNR) for the 100 random graphs. When constructing the time series, every time step is calculated as the weighted sum of the previous step of all the incoming time series, including the previous step of the current time series. The weights of the adjacent matrix between the time series are selected from a uniform distribution in the range $[0.7, 0.95]$ if they have not been set to zero (we thus prevent too weak edges, which would result in almost non-faithfulness distributions that render the problem of detecting causes impossible). 

The two CI tests are calculated with partial correlation, since our simulations are linear, but there is no restriction for non-linear systems (see extension in \ref{discussion}). For the ``lag'' calculation step of our method, we use lasso in a bivariate form between each node in $\bX$ in the summary graph and $Y$ (for the non-linear this step can be replaced with a non-linear regressor). We found that for regularization $\lambda=0.001$ and mostly any threshold on the coefficients of this step between 0.1 and 0.15, the results are stable. We fixed these two parameters once before running the experiments, without re-adjusting them for the different types of graphs. We simulated the time series with unique direct lag of 1, since our conditions are necessary only for single-lag dependencies. Nevertheless, we tested the performance of our method even with multiple lags, which we present in Appendix, Section 3.2.4.  
Moreover, we compared our method to Lasso-Granger \citep{Arnold2007} for 2 hidden and 3, 4 and 5 observed time series. SyPI operates with two thresholds for the $p$ values of the two tests, one (\textit{threshold1}) for rejecting independence in the first condition, and a second (\textit{threshold2}) for accepting independence in the second condition. Lasso-Granger \citep{Arnold2007} operates with one hyper-parameter: the regularizer $\lambda$. To ensure a fair comparison, we tuned the $\lambda$ for Lasso-Granger (not SyPI) such as to allow it at least the same FNR as our method, for same type of graphs. We did not do the comparison based on matching FPR, because Lasso-Granger generates many FPs in the presence of hidden confounders. For all the experiments, we used \textit{threshold1}$=0.01$ and \textit{threshold2}$=0.2$ for SyPI. 
In addition, we produced ROC curves for the two methods, as we present in detail in Appendix Section 3.2.3. 
 
Finally, we compared SyPI against seqICP \citep{pfister2019invariant} and PCMCI \citep{runge2019detecting}. We simulated 10 different combinations (2 to 6 observed and 1 to 2 hidden series) testing 20 random graphs for each one, for sample size 2000 and medium density. 

\subsection{Experiments on real-data}\label{realdata}
We also examined the performance of SyPI on real data, where we have no guarantee that our assumptions hold true. We use the official recorded prices of dairy products in Europe \citep{dataset} (data provided, Appendix. Sec. 3.1). The target of our analysis is 'Butter'. According to the manufacturing process described in \citep{milk_processing}, the first material for butter is 'Raw Milk', and the butter is not used as ingredient for the other dairy products in the list (sink node assumption). Therefore, we can hypothesize that the direct cause of Butter prices is the price of Raw Milk, and that the rest (other cheese, WMP, SMP, Whey Powder) are not causing butter's price. We examine three countries, two of which provide data for 'Raw Milk' (Germany 'DE' (8 time series) and Ireland 'IE' (6 time series)), and one where these values are not provided (United Kingdom 'UK' (4 time series)). This last dataset was on purpose selected as this would be a good realistic scenario of a hidden confounder. In that case our method must not identify any cause. As we have extremely low sample sizes (<180) identifying dependencies is particularly hard. For that reason we set 0 threshold on our lag detector and the \textit{threshold1} at $0.05$ for accepting dependence in the first condition. 

\section{RESULTS}\label{results}
\subsection{Simulated graphs}

First, we tested SyPI for varying density of edges, noise levels, sample sizes, and number of observed series with one hidden. Figures 1a - 4h 
in Appendix Section 3 
present the FPR and FNR for all these combinations.
Overall, our method yielded FPR below 1\% for sample size $> 500$, independent of noise level, density, or size of the graphs. FNR for the direct causes (indicated with red) ranges between 12\% for small and sparse graphs and 45\% for very large and dense graphs. Fig. \ref{hiddenVars} shows the behaviour of our algorithm in moderately dense graphs, for 2000 sample size, 20\% noise variance and varying number of hidden series. 
We see that the FPR is close to zero, independent of the number of hidden variables. Although the total FNR increases with the number of series, the FNR that corresponds to \textit{direct} causes (dashed lines), remains below 40\%. We focus on the missed \textit{direct} causes because our conditions are necessary only for the direct ones. Results are similar for other densities (see Appendix Sec. 3). 

\subsection{Comparison against Lasso-Granger, seqICP and PCMCI}

First, we compare our algorithm against the widely used Lasso-Granger method, for moderately dense graphs, for 2 hidden, 1 target and 3, 4 or 5 observed time series. Fig. \ref{againstGranger} shows that even in such confounded graphs SyPI yields almost zero FPR, for similar or even lower total FNR than Lasso-Granger, which yields up to 16\% FPR. Moreover, Figure 7
 in the Appendix shows the ROC curve for the performance of SyPI and Lasso-Granger for the same graphs. 
 We see that at all operating points our method outperforms Lasso-Granger, with SyPI's ROC curve being above the Lasso-Granger one.

Figure \ref{sypi_vs_seqicp_pcmci} shows the comparison of SyPI with PCMCI and seqICP. As we can see, SyPI has the lowest FPR ($< 1.5 \%$) compared to PCMCI and seqICP for all type of tested graphs, and lower both direct ($20 - 40\%$, dashed lines) and total (solid lines) FNR than seqICP, which yielded up to $12\%$ FPR and around $95\%$ FNR. This is not surprising, as with hidden confounders seqICP will detect only a subset of the ancestors AN(Y). PCMCI yielded up to $25\%$ FPR and around $25\%$ FNR.


\begin{figure}[H]
\centering
\includegraphics[width=\columnwidth]{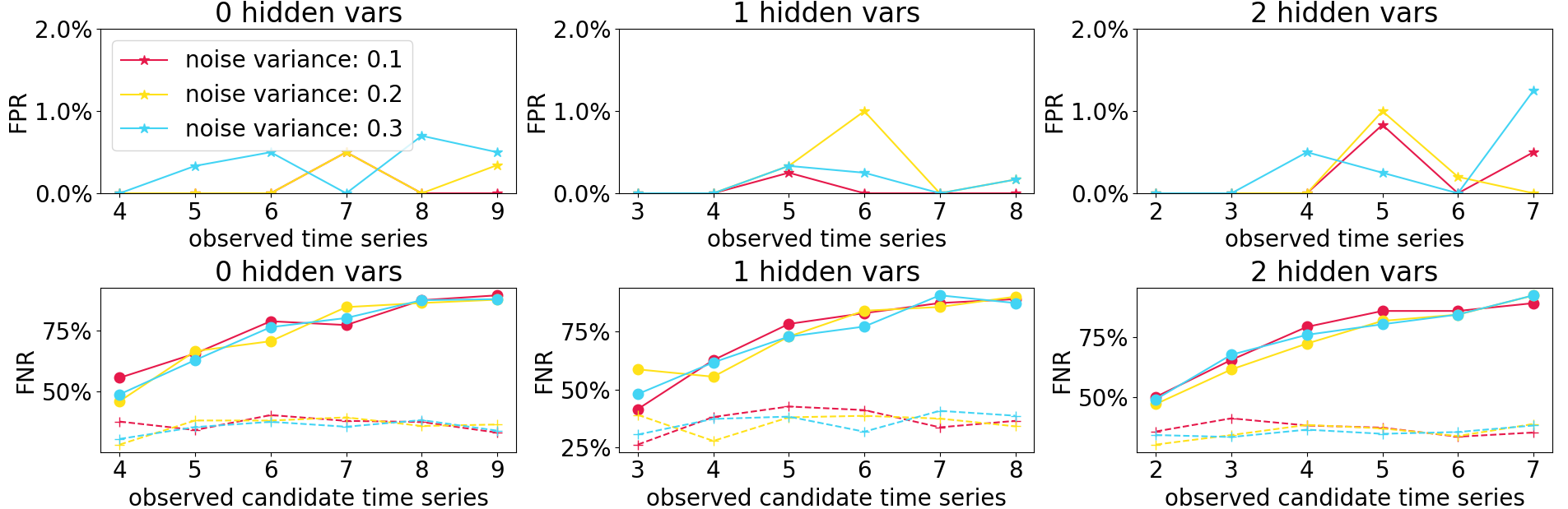}
\caption{FPR and FNR for varying number of hidden (columns) and observed series (x-axis), noise variance and sample size 2000, for medium density. FPR is very low ($<1.2\%$) for any number of hidden series. Although the total FNR increases with the graph size, the FNR for the direct causes (dashed lines), for which our method is complete, remains $<40\%$.}\label{hiddenVars}
\end{figure}

\begin{figure}[H]
\centering
\includegraphics[width=\columnwidth]{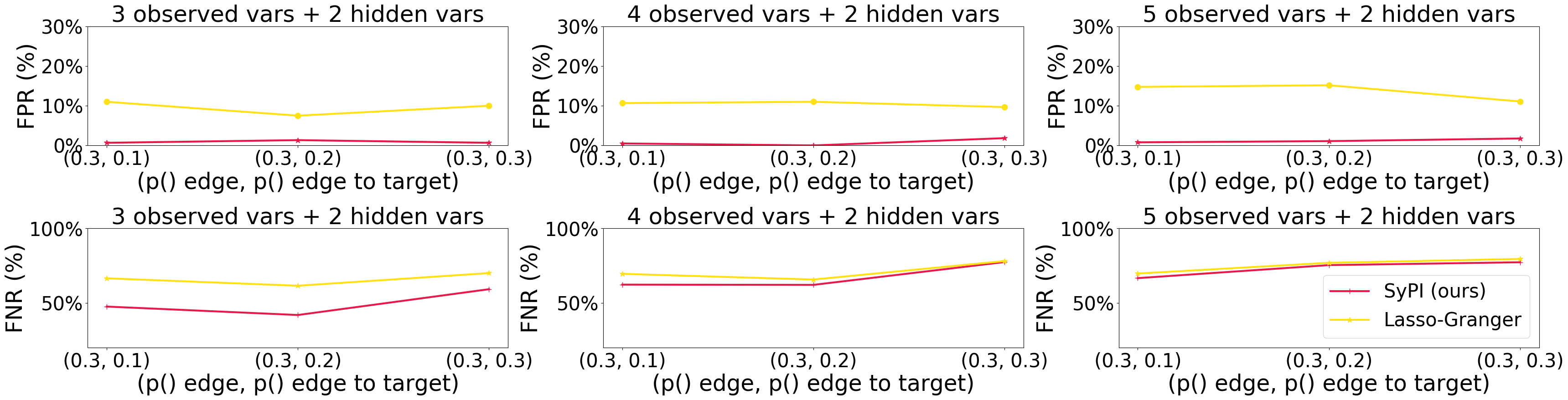}
\caption{Comparison of our method against Lasso-Granger, for sample size 2000, 2 hidden variables, 20\% noise variance, for varying number of observed time series (columns) and sparsity of edges (x-axis). As we see, \textit{SyPI} performs with significantly lower FPR ($<1\%$) than Lasso-Granger, for similar or even lower FNR (direct + indirect). In contrast, Lasso-Granger reaches up to 16\% FPR. Not tuning $\lambda$ for Lasso-Granger led to even larger FPR.}\label{againstGranger}
\end{figure}

\begin{figure}[H]
\centering\includegraphics[width=\columnwidth]{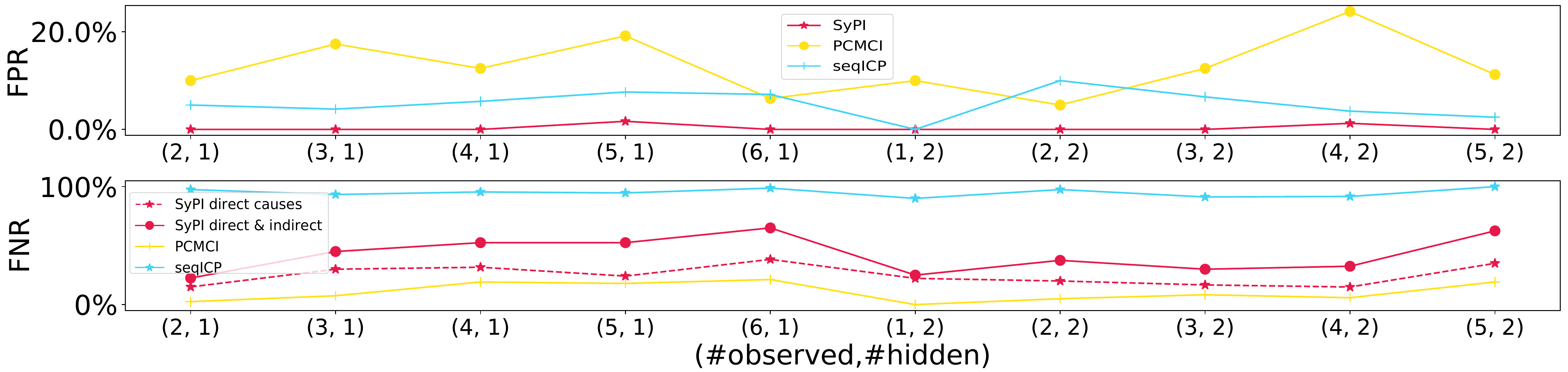}
\caption{Comparison of SyPI against seqICP and PCMCI, for ten types  (\# observed, \# hidden time series) of graphs. FPR and FNR are reported over 20 random graphs of each type. Our method SyPI has the lowest FPR ($<1.5 \%$) and direct-FNR $20-40\%$ (dash line). SeqICP yielded $12\%$ FPR and $95\%$ FNR. This is not surprising, as with hidden confounders seqICP will detect only a subset of $AN(Y)$. PCMCI yielded $25\%$ FPR and $25\%$ FNR for $a=0.05$.}\label{sypi_vs_seqicp_pcmci}
\end{figure}

\subsection{Experiments on real data: Product prices}
We applied SyPI on the dairy-product prices for 'DE', 'IE' and 'UK'. 
SyPI successfully identified 'Raw Milk' as the direct cause of 'Butter' in the 'IE' dataset, correctly rejecting the remaining 4 nodes ($100\%$ TPR, $100\%$ TNR). In 'DE' 'Raw Milk' was correctly identified with only one false positive ('Edam'); the rest 6 nodes were rejected ($100\%$ TPR, $84\%$ TNR). Finally, in the 'UK' dataset where no measurements for 'Raw Milk' were provided (hidden confounder), SyPI correctly did not identify any cause ($100\%$ TNR). 

\section{DISCUSSION}\label{discussion}
\paragraph{Efficient conditioning set:}
In contrast to other approaches, and due to the narrower task, our method does not search over a large set of possible combinations to identify the right conditioning sets. Instead, for each potential cause $X^i$ it directly constructs its `separating set' for the nodes $X^i_{t-1}$ and $Y_{t+w_i}$ (condition 2), from a pre-processing step that identifies ($\mathbf{S^i}$) the nodes of the time series that enter $Y_{t+w_i-1}$. The resulting set $\{\mathbf{S^i}, Y_{t+w_i-1}, X^i_t\}$ contains therefore covariates that enter the outcome node $Y_{t+w_i}$, and not the potential cause $X^i_{t-1}$. Adjustment sets that include parents of the potential cause node are considered inefficient in terms of asymptotic variance of the causal effect estimate \citep{henckel2019graphical}, as they can reduce the variance of the \textit{cause} if they are strongly correlated with it, and thus reduce the signal. On the other hand, adding nodes that explain variance in the \textit{outcome} node can contribute to a better signal to noise ratio for the dependences under consideration, and as such, to a stronger statistical outcome.

\paragraph{Non-linear systems \& Multiple-lags:}
Our algorithm can be used for both linear and non-linear relationships between the time series. For the linear case, a partial correlation test is sufficient to examine the conditional dependencies, while in the non-linear case KCI \citep{zhang2012kernelbased}, KCIPT \citep{doran} or FCIT \citep{Chalupka2018FastCI} could be used.
Although our algorithm performs well for FPR in simulations with ``multiple-lags'' (see Fig. 8 
in the Appendix), Theorem 2 conditions are necessary only for ``single-lags'' (see \ref{singlelags}). We could allow for ``multiple-lags'' if we were willing to condition on larger sets of nodes, which we do not find acceptable for statistical reasons. Right now, we require at most one node from each observed time series for the conditioning set. In a naive approach, $n$ coexisting lags would require $n$ nodes from each time series in the conditioning set, but the theory is getting cumbersome. We further discuss future work on multiple-lags in Appendix Sec. 4.

\paragraph{Comparison with related work:}
\cite{pfister2019invariant} (seqICP) is another method that aims at causal feature selection, instead of full graph discovery. However, seqICP requires sufficient interventions in the dataset, which should affect only the input and not the target. In the presence of hidden confounders, seqICP will detect a subset of the ancestors of target $Y$, if the dataset contains sufficient interventions on the predictors. Given our assumptions, we proved that our method will detect all the unconfounded direct causes of $Y$, even in presence of latent confounders, given our assumptions, without requiring interventions in the dataset. Our method's complexity ($\mathcal{O}(n)$) is also smaller than seqICP ($\mathcal{O}(n\log{n})$). 
A method with a larger goal - that of full graph causal discovery - which however could easily be adjusted for our narrower goal is PCMCI by \cite{runge2019detecting}. Nevertheless, PCMCI assumes causal sufficiency, which is often violated in real datasets. Finally, methods that focus on the full graph causal discovery on time series are FCI-based methods from \citep{Entner2010OnCD} and \citep{malinsky18a}. Although our method's goal is narrower - that of causal feature selection - and there is no direct way of comparison with the aforementioned FCI-based methods, it is still worth mentioning some differences on a high level. SVAR-FCI is computationally intensive with exhaustive CI tests for all lags and conditioning sets. SyPI, due to its narrower goal and imposed assumptions, calculates in advance both the lag and the conditioning set for each CI, significantly reducing testing.
Although our graphical assumptions are many, we do not consider them extreme, given the hardness of the problem of hidden confounding. 
With \ref{memorylesshiddendirect}), we try to avoid the problem that auto-lag hidden confounders create by inducing infinite-lag associations; a case in which also \citep{malinsky18a} don't find causal relationships as stated there.

\paragraph{Conclusion}
Here we stated necessary and sufficient conditions for time series to causally influence a target one, even in the possible presence of latent common causes, subject to some connectivity assumptions that seemed hard to avoid.  
We focused on the narrower task of causal feature selection, and by proving that with only two conditional independence tests per candidate cause, with a relatively small conditioning set it is possible to detect unconfounded direct and indirect causes, we provided an algorithm that scales linearly with the number of time series, and does not assume causal sufficiency.
Our simulations showed that for varying graph types, SyPI outperforms Lasso-Granger and seqICP. Finally, in three real datasets, despite the potential violation of our assumptions and the low sample size, SyPI yielded almost $100\%$ TPR and TNR.

\section{Acknowledgements}
The authors would like to thank Andreas Gerhardus and Jakob Runge for their interesting comments and feedback.

\bibliography{myBib_nips2020}
\bibliographystyle{unsrtnat}
\urlstyle{same}

\include{aistats_sypi_mastakouri_supplement.tex}
\end{document}